\documentclass[10pt,conference]{IEEEtran}
\usepackage{fixltx2e}
\usepackage{cite}
\usepackage{url}
\usepackage{mathrsfs}
\usepackage{color}
\usepackage{float}
\usepackage[caption=false]{subfig}

\ifCLASSINFOpdf
   \usepackage[pdftex]{graphicx}
   \graphicspath{{Figs/}}
   \DeclareGraphicsExtensions{.pdf,.jpeg,.png}
\else
\fi

\usepackage[cmex10]{amsmath}
\usepackage{amsmath}
\usepackage{amssymb}
\usepackage{amsthm}
\usepackage{amsfonts}
\usepackage{bm}
\usepackage{xfrac}
\usepackage{empheq}
\usepackage[normalem]{ulem} 
\usepackage{soul} 
\usepackage{mathtools}

\newtheorem{theorem}{Theorem}

\newtheorem{example}{Example}
\newtheorem{proposition}{Proposition}
\newtheorem{lemma}{Lemma}

\newtheorem{corollary}{Corollary}
\newtheorem{remark}{Remark}
\theoremstyle{definition}
\newtheorem{definition}{Definition}

\usepackage[a4paper,bindingoffset=0.2in,%
left=1in,right=1in,top=1in,bottom=1in,%
footskip=.25in]{geometry}
\graphicspath{{figs/}}

\interdisplaylinepenalty=2500
\begin{document}
	\newgeometry{left=0.7in,right=0.7in,top=.5in,bottom=1in}
	\title{Multi-User Privacy Mechanism Design with Non-zero Leakage}
\vspace{-5mm}
\author{
		\IEEEauthorblockN{Amirreza Zamani, Tobias J. Oechtering, Mikael Skoglund \vspace*{0.5em}
			\IEEEauthorblockA{\\
                              Division of Information Science and Engineering, KTH Royal Institute of Technology \\
				Email: \protect amizam@kth.se, oech@kth.se, skoglund@kth.se }}
		}
	\maketitle
%
\begin{abstract}
	A privacy mechanism design problem is studied through the lens of information theory. 
In this work, an agent observes useful data $Y=(Y_1,...,Y_N)$ that is correlated with private data $X=(X_1,...,X_N)$ which is assumed to be also accessible by the agent. Here, we consider $K$ users where user $i$ demands a sub-vector of $Y$, denoted by $C_{i}$. The agent wishes to disclose $C_{i}$ to user $i$. Since $C_{i}$ is correlated with $X$ it can not be disclosed directly. 
A privacy mechanism is designed to generate disclosed data $U$ which maximizes a linear combinations of the users utilities while satisfying a bounded privacy constraint in terms of mutual information. In a similar work it has been assumed that $X_i$ is a deterministic function of $Y_i$, however in this work we let $X_i$ and $Y_i$ be arbitrarily correlated. 

First, an upper bound on the privacy-utility trade-off is obtained by using a specific transformation, Functional Representation Lemma and Strong Functional Representaion Lemma, then we show that the upper bound can be decomposed into $N$ parallel problems. 
Next, lower bounds on privacy-utility trade-off are derived using Functional Representation Lemma and Strong Functional Representaion Lemma. The upper bound is tight within a constant and the lower bounds assert that the disclosed data is independent of all $\{X_j\}_{i=1}^N$ except one which we allocate the maximum allowed leakage to it. 
Finally, the obtained bounds are studied in special cases.
\end{abstract}
\section{Introduction}

Recently, the privacy mechanism design problem through the lens of information theory is receiving increased attention
\cite{Calmon2, yamamoto, issa, makhdoumi, sankar,borz, gun,khodam,Khodam22,kostala, dwork1, calmon4, issajoon, asoo, Total, issa2, zamani2022bounds, zamaniITW2022, kosenaz, naz2, 9457633,asoodeh1,MMSE,nekouei2,bassi}. 
Specifically, 
fundamental limits of the privacy utility trade-off measuring the leakage using estimation-theoretic guarantees are studied in \cite{Calmon2}. A related source coding problem with secrecy is studied in \cite{yamamoto}.
The concept of maximal leakage has been introduced in \cite{issa} and some bounds on the privacy utility trade-off have been derived. 
The concept of privacy funnel is introduced in \cite{makhdoumi}, where the privacy utility trade-off has been studied considering the log-loss as privacy measure and a distortion measure for utility. 

The privacy-utility trade-offs considering equivocation and expected distortion as measures of privacy and utility are studied in both \cite{yamamoto} and \cite{sankar}.
In \cite{borz}, the problem of privacy-utility trade-off considering mutual information both as measures of utility and privacy given the Markov chain $X-Y-U$ is studied. It is shown that under the perfect privacy assumption, i.e., no leakages are allowed, the privacy mechanism design problem can be obtained by a linear program. This has been extended in \cite{gun} considering the privacy utility trade-off with a rate constraint on the disclosed data.
\begin{figure}[]
	\centering
	\includegraphics[scale = .11]{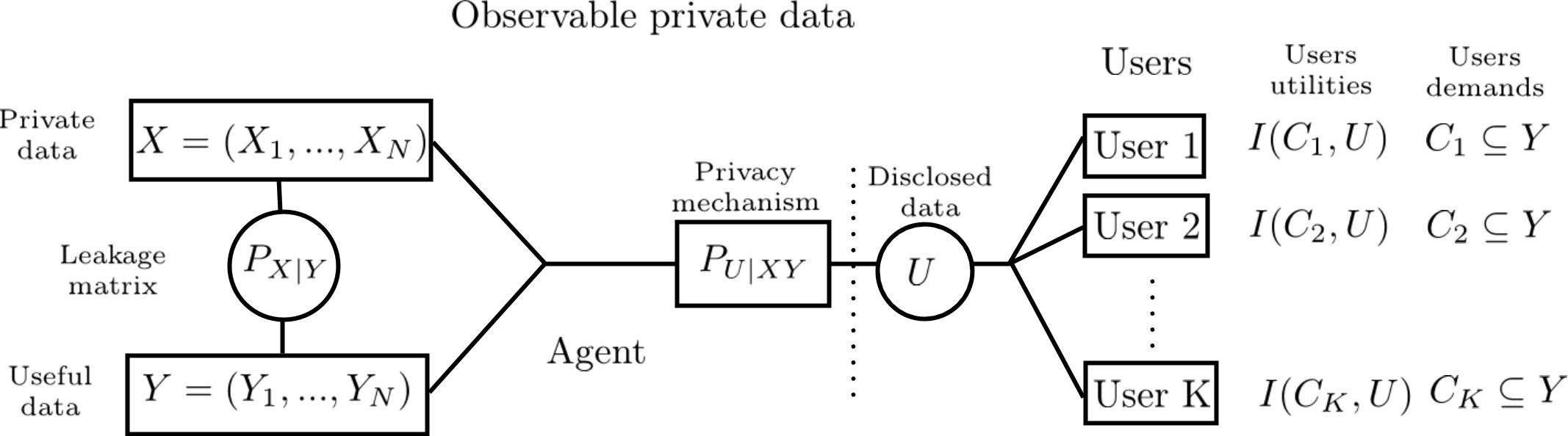}
	\caption{In this work the agent has access to $Y$ and $X$. Each user demands a sub-vector of $Y$ and the agent releases message $U$ which maximizes a linear combination of utilities.}
	\label{ITWsys}
\end{figure}
Moreover, in \cite{borz}, it has been shown that information can be only revealed if the kernel (leakage matrix) between useful data and private data is not invertible. In \cite{khodam}, we generalize \cite{borz} by relaxing the perfect privacy assumption allowing some small bounded leakage. More specifically, we design privacy mechanisms with a per-letter privacy criterion considering an invertible kernel where a small leakage is allowed. We generalized this result to a non-invertible leakage matrix in \cite{Khodam22}.\\
In \cite{kostala}, the problem of \emph{secrecy by design} is studied and bounds on privacy-utility trade-off for two scenarios where the private data is hidden or observable are derived by using the Functional Representation Lemma. These results are derived under the perfect secrecy assumption, i.e., no leakages are allowed. 
In \cite{zamani2022bounds}, we generalized the privacy problems considered in \cite{kostala} by relaxing the perfect secrecy constraint and allowing some leakages. 
Furthermore, in the special case of perfect privacy we derived a new upper bound for the perfect privacy function and it has been shown that this new bound generalizes the bound in \cite{kostala}. Moreover, it has been shown that the bound is tight when $|\mathcal{X}|=2$.\\
In \cite{zamaniITW2022}, we considered the privacy-utility trade-off with two different per-letter privacy constraints in two scenarios where the private data is hidden or observable. Upper and lower bounds are derived and it has been shown that the bounds in the first scenario, where the private data is hidden to the agent, are asymptotically optimal when the private data is a deterministic function of useful data. \\
Our problem here is closely related to \cite{9457633}, where fundamental limits of private data disclosure are studied. The goal is to minimize the leakage under the utility constraints with non-specific tasks. It has been shown that under the assumption that the private data is an element-wise deterministic function of useful data, 
the main problem can be reduced to multiple privacy funnel (PF) problems. Moreover, the exact solution to each problem has been obtained.  \\
In this paper, $Y=(Y_1,...,Y_N)$ denotes the useful data where $Y_1,...,Y_N$ are mutually independent random variables (RV). The useful data is correlated with the private data denoted by $X=(X_1,...,X_N)$ where $X_1,...,X_N$ are mutually independent RVs. As shown in Fig.~\ref{ITWsys}, user $i$, $i\in\{1,...,K\}$, demands an arbitrary sub-vector of $Y$ denoted by $C_i$ and an agent wishes to design disclosed data denoted by $U$ which maximizes a linear combination of the utilities (weighted sum-utility) while satisfying the bounded leakage constraint, i.e., $I(X;U)\leq \epsilon$. Utility of user $i$ is measured by the mutual information between $C_i$ and $U$, i.e., $I(C_i,U)$. The problem considered in this paper is motivated by the dual problem of the privacy design problem studied in \cite{9457633}. The assumption that the private data is a deterministic function of useful data is very restrictive, hence, in this work we generalize the assumption by letting $X_i$ and $Y_i$ be arbitrarily correlated.     

We first find upper bounds on the privacy-utility trade-off by using the same transformation as used in \cite[Th. 1]{9457633} and \cite[Th. 1]{liu2020robust}. We then show that the upper bound can be decomposed into $N$ parallel privacy design problems. Moreover, lower bounds on privacy-utility trade-off are obtained by using Functional Representation Lemma (FRL) and Strong Functional Representation Lemma (SFRL). The lower bound provides us a privacy mechanism design which is based on the randomization used in \cite{zamani2022bounds}. In \cite{zamani2022bounds}, we used randomization to extend FRL and SFRL relaxing the independence condition. 
Furthermore, a similar randomization technique has been used in \cite{warner1965randomized}. We show that the upper bound is tight within a constant term and the lower bound is optimal when the private data is an element-wise deterministic function of the useful data, i.e., the mechanism design is optimal in this case. 
Finally, we study the bounds in different scenarios.

\section{system model and Problem Formulation} \label{sec:system}
Let $P_{XY}(x,y)=P_{X_1,..,X_N,Y_1,..,Y_N}(x_1,..,x_N,y_1,..,y_N)$ denote the joint distribution of discrete random variables $X$ and $Y$ defined on finite alphabets $\cal{X}$ and $\cal{Y}$. Here, we assume that $\{(X_i,Y_i)\}_{i=1}^N$ are mutually independent, hence, $P_{XY}(x,y)=\prod_{i=1}^N P_{X_i,Y_i}(x_i,y_i)$, where $X_i$ and $Y_i$ are arbitrarily correlated. The demand of user $i$, i.e., $C_i$, $i\in{1,...,K}$, is a discrete random variable with finite alphabet $\mathcal{C}_i$. As we mentioned earlier the agent has access to both $X$ and $Y$ and design disclosed data $U$ defined on $\mathcal{U}$ which maximizes a linear combination of the utilities while satisfying a privacy leakage constraint.  
The relation between $X$ and $Y$ is given by $P_{X|Y}$ where we represent the leakage matrix $P_{X|Y}$ by a matrix defined on $\mathbb{R}^{|\mathcal{X}|\times|\cal{Y}|}$. 
The relation between $U$ and the pair $(Y,X)$ is described by the conditional distribution $P_{U|Y,X}(u|x,y)$.
In this work, RVs $X$ and $Y$ denote the private data and the useful data and $U$ describes the disclosed data. 


 The privacy mechanism design problem in this scenario can be stated as follows 
\begin{align}
h_{\epsilon}^K(P_{XY})&=\sup_{\begin{array}{c} 
	\substack{P_{U|Y,X}:I(X;U)\leq\epsilon}
	\end{array}}\sum_{i=1}^K \lambda_iI(C_i;U),\label{main1}
\end{align} 
where for $i\in\{1,..,K\}$, $\lambda_i\geq 0$ is fixed.\\
In the following we study the case where $0\leq\epsilon< I(X;Y)$, otherwise the optimal solution of $h_{\epsilon}^K(P_{XY})$ is $\sum_{i=1}^{K} \lambda_i H(C_i)$ achieved by $U=Y$.  
 \begin{remark}
	\normalfont
	The problem defined in \eqref{main1} is motivated by the dual to the problem considered in \cite[Eq.~(8)]{9457633}, however the Markov chain $(X,C_1,..,C_K)-Y-U$ is removed due to the assumption that the agent has access to both $X$ and $Y$ and $C_i$ is a sub-vector of $Y$. Here, instead of minimizing the leakage we maximize the linear combination of the utilities under the bounded privacy leakage constraint. 
\end{remark}  
\begin{remark}
	\normalfont
	The weights $\lambda_1,..,\lambda_K$ can correspond to the Lagrange multipliers of \cite[Eq.~(8)]{9457633}. Furthermore, they can correspond to different priorities for different users. For instance, if the utility of user $i$ is more important than the utility of user $j$ we let $\lambda_i$ be larger than $\lambda_j$.  
\end{remark}
\begin{remark}
	\normalfont 
	In case of perfect privacy, i.e., $\epsilon=0$, and $N=K=1$, \eqref{main1} leads to the secret-dependent perfect privacy function $h_0(P_{XY})$, studied in \cite{kostala}, where upper and lower bounds on $h_0(P_{XY})$ have been derived. In \cite{zamani2022bounds}, we have strengthened these bounds.
\end{remark}
 \section{Main Results}\label{sec:resul}
 In this section, we derive upper and lower bounds for $h_{\epsilon}^K(P_{XY})$. Next, we study the bounds under the perfect privacy assumption, moreover, consider the case where the private data is an element-wise deterministic function of the useful data.
 \subsection{Upper bounds on $\bm{h_{\epsilon}^K(P_{XY})}$:}
 In this section, we first derive a similar result as \cite[Th.~1]{9457633} which results in upper bounds on the privacy-utility trade-off defined in \eqref{main1}. To do so we introduce a random variable $\bar{U}=(\bar{U}_1,..,\bar{U}_N)$ which is constructed based on \cite[Th.~1]{liu2020robust} and study the corresponding properties. We then decompose the upper bound into $N$ parallel problems.
 First, let us recall the FRL \cite[Lemma~1]{kostala} and SFRL \cite[Theorem~1]{kosnane} for discrete $X$ and $Y$.
 \begin{lemma}\label{lemma1} (Functional Representation Lemma \cite[Lemma~1]{kostala}):
 	For any pair of RVs $(X,Y)$ distributed according to $P_{XY}$ supported on alphabets $\mathcal{X}$ and $\mathcal{Y}$ where $|\mathcal{X}|$ is finite and $|\mathcal{Y}|$ is finite or countably infinite, there exists a RV $U$ supported on $\mathcal{U}$ such that $X$ and $U$ are independent, i.e., we have
 	\begin{align}\label{c1}
 	I(U;X)=0,
 	\end{align}
 	$Y$ is a deterministic function of $(U,X)$, i.e., we have
 	\begin{align}
 	H(Y|U,X)=0,\label{c2}
 	\end{align}
 	and 
 	\begin{align}
 	|\mathcal{U}|\leq |\mathcal{X}|(|\mathcal{Y}|-1)+1.\label{c3}
 	\end{align}
 \end{lemma}
 \begin{lemma}\label{lemma2} (Strong Functional Representation Lemma \cite[Theorem~1]{kosnane}):
 	For any pair of RVs $(X,Y)$ distributed according to $P_{XY}$ supported on alphabets $\mathcal{X}$ and $\mathcal{Y}$ where $|\mathcal{X}|$ is finite and $|\mathcal{Y}|$ is finite or countably infinite with $I(X,Y)< \infty$, there exists a RV $U$ supported on $\mathcal{U}$ such that $X$ and $U$ are independent, i.e., we have
 	\begin{align*}
 	I(U;X)=0,
 	\end{align*}
 	$Y$ is a deterministic function of $(U,X)$, i.e., we have 
 	\begin{align*}
 	H(Y|U,X)=0,
 	\end{align*}
 	$I(X;U|Y)$ can be upper bounded as follows
 	\begin{align*}
 	I(X;U|Y)\leq \log(I(X;Y)+1)+4,
 	\end{align*}
 	and 
 	$
 	|\mathcal{U}|\leq |\mathcal{X}|(|\mathcal{Y}|-1)+2.
 	$
 \end{lemma}
 In the next lemma we introduce $\bar{U}=(\bar{U}_1,..,\bar{U}_N)$ and provide some properties. 
 \begin{lemma}\label{pare}
 	 Let $\bar{U}=(\bar{U}_1,..,\bar{U}_N)$ where for all $i\in\{1,...,N\}$, $\bar{U}_i$ is a discrete RV defined on the alphabet $\bar{\mathcal{U}}_i=\mathcal{U}\times \mathcal{X}_1\times...\times \mathcal{X}_{i-1} $. 
 	For any feasible $U$ in \eqref{main1}, let $(\bar{U},X,Y,U)$ have the following joint distribution
 	\begin{align}\label{dorost}
 	P_{\bar{U},X,Y,U}(\bar{u},x,y,u)=P_{X,Y,U}(x,y,u)\prod_{i=1}^{N} P_{\bar{U}_i|X_i}(\bar{u}_i|x_i).
 	\end{align}
 	Furthermore, for all $i\in\{1,...,N\}$ let 
 	\begin{align}\label{dorost2}
 	P_{\bar{U}_i|X_i}(\bar{u}_i|x_i)=P_{X_1,...,X_{i-1},U|X_i}(x_1^{(i)},...,x_{i-1}^{(i)},u^{(i)}|x_i),
 	\end{align}
 	where $u_i=(x_1^{(i)},...,x_{i-1}^{(i)},u^{(i)})$. Then, we have
 	\begin{itemize}
 		\item [i.] $\bar{U}-X-(Y,U)$ forms a Markov chain.
 		\item [ii.] $\{(\bar{U}_i,Y_i,X_i)\}_{i=1}^N$ are mutually independent.
 	\end{itemize}
 \end{lemma}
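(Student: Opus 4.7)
The plan is to read both claims directly off the joint distribution specified in \eqref{dorost}--\eqref{dorost2} by elementary marginalization, using only the independence assumption $\{(X_i,Y_i)\}_{i=1}^N$ and the product form of the conditional $P_{\bar U\mid X,Y,U}$ that is built into the construction. There is no deep step here; the task is essentially careful bookkeeping, so the challenge is simply to lay out the factorizations in the right order.

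For part (i), I would start from \eqref{dorost} and divide both sides by $P_{X,Y,U}(x,y,u)$. Whenever the denominator is positive this yields
\begin{align*}
P_{\bar U\mid X,Y,U}(\bar u\mid x,y,u)=\prod_{i=1}^{N} P_{\bar U_i\mid X_i}(\bar u_i\mid x_i),
\end{align*}
which, as a function of the conditioning variables, depends only on $x=(x_1,\dots,x_N)$. This is exactly the statement $\bar U-X-(Y,U)$.

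For part (ii), I would first marginalize $U$ out of \eqref{dorost} to obtain
\begin{align*}
P_{\bar U,X,Y}(\bar u,x,y)=P_{X,Y}(x,y)\prod_{i=1}^{N}P_{\bar U_i\mid X_i}(\bar u_i\mid x_i),
\end{align*}
and then invoke the standing assumption $P_{X,Y}(x,y)=\prod_{i=1}^N P_{X_i,Y_i}(x_i,y_i)$ to rewrite the right-hand side as $\prod_{i=1}^N P_{X_i,Y_i}(x_i,y_i)P_{\bar U_i\mid X_i}(\bar u_i\mid x_i)$. It then remains to identify each factor with the marginal $P_{\bar U_i,X_i,Y_i}(\bar u_i,x_i,y_i)$: summing the full expression over all variables with index $j\neq i$ collapses each other factor to $1$, leaving $P_{X_i,Y_i}(x_i,y_i)P_{\bar U_i\mid X_i}(\bar u_i\mid x_i)$, which is exactly the marginal in question. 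Substituting back gives the product form $P_{\bar U,X,Y}=\prod_{i=1}^N P_{\bar U_i,X_i,Y_i}$, i.e., mutual independence of the triples $(\bar U_i,X_i,Y_i)$.

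The only subtlety worth flagging is consistency of \eqref{dorost}--\eqref{dorost2} as a bona fide joint distribution: each $P_{\bar U_i\mid X_i}$ defined in \eqref{dorost2} is a genuine conditional kernel because the right-hand side is a marginal of $P_{X,U\mid X_i}$, and so sums to $1$ over $\bar u_i\in\bar{\mathcal U}_i=\mathcal U\times\mathcal X_1\times\cdots\times\mathcal X_{i-1}$; this legitimacy is what makes the collapsing sums in part (ii) work. No further ingredients beyond the hypotheses of the lemma are needed.
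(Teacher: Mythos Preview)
Your argument is correct and is precisely the direct verification that the paper alludes to: the paper's own proof simply cites \cite[Th.~1]{9457633} for both parts, noting only that the independence assumption $P_{XY}=\prod_i P_{X_iY_i}$ is what makes (ii) go through, which is exactly the hypothesis you invoke. Your write-up is a self-contained version of that deferred argument, with the consistency check on $P_{\bar U_i\mid X_i}$ being a useful detail the paper does not spell out.
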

 \begin{proof}
 	The proof of (i) is similar as proof of \cite[Th.~1]{9457633}. Furthermore, for proving (ii) note that by assumption we have $P_{XY}(x,y)=\prod_{i=1}^N P_{X_iY_i}(x_i,y_i)$ so that the same proof as \cite[Th.~1]{9457633} works. 
 \end{proof}
 In the next lemma we show that for any any feasible $U$ in \eqref{main1}, $\bar{U}$ has the same privacy leakage as $U$.
 \begin{lemma}\label{pare2}
 	for any any feasible $U$ in \eqref{main1}, let $\bar{U}$ be the RV that is constructed in Lemma~\ref{pare}. We have
 	\begin{align}\label{dorost3}
 	I(X;U)=I(X;\bar{U}).
 	\end{align}
 \end{lemma}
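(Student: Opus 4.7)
The plan is to show that both $I(X;U)$ and $I(X;\bar{U})$ equal the common sum $\sum_{i=1}^{N} I(X_i;\bar{U}_i)$, and then conclude by transitivity. The proof only uses the mutual independence of $X_1,\ldots,X_N$, the construction in \eqref{dorost} and \eqref{dorost2}, and part (ii) of Lemma~\ref{pare}, so it is essentially a chain-rule computation.

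First, I would decompose $I(X;U)$ by the chain rule as $I(X;U)=\sum_{i=1}^{N} I(X_i;U\,|\,X_1,\ldots,X_{i-1})$. Since the $X_i$ are mutually independent, $H(X_i|X_1,\ldots,X_{i-1})=H(X_i)$, and hence each term equals $I(X_i;U,X_1,\ldots,X_{i-1})$. The definition \eqref{dorost2} sets $P_{\bar{U}_i|X_i}$ equal to $P_{X_1,\ldots,X_{i-1},U|X_i}$ as conditional kernels on $\mathcal{U}\times\mathcal{X}_1\times\cdots\times\mathcal{X}_{i-1}$, so marginalizing \eqref{dorost} down to $(X_i,\bar{U}_i)$ shows that the joint law of $(X_i,\bar{U}_i)$ coincides with that of $(X_i,(X_1,\ldots,X_{i-1},U))$. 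This yields $I(X_i;\bar{U}_i)=I(X_i;X_1,\ldots,X_{i-1},U)$, and summing over $i$ gives $I(X;U)=\sum_{i=1}^{N} I(X_i;\bar{U}_i)$.

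For the other side, I would invoke part (ii) of Lemma~\ref{pare}, which asserts that $\{(\bar{U}_i,Y_i,X_i)\}_{i=1}^{N}$ are mutually independent. Projecting onto the $(\bar{U}_i,X_i)$ coordinates, the pairs $\{(\bar{U}_i,X_i)\}_{i=1}^{N}$ are mutually independent as well, so the standard additivity of mutual information under a product joint distribution gives $I(X;\bar{U}) = \sum_{i=1}^{N} I(X_i;\bar{U}_i)$. Combined with the previous step, this yields \eqref{dorost3}.

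The only nontrivial step is verifying the marginal identification $P_{X_i,\bar{U}_i}=P_{X_i,X_1,\ldots,X_{i-1},U}$. This is a direct but notation-heavy manipulation of \eqref{dorost}, in which all $\bar{U}_j$ with $j\neq i$ integrate out because the kernels $P_{\bar{U}_j|X_j}$ are genuine probability distributions, and the remaining factor reduces to $P_{X_i}\,P_{X_1,\ldots,X_{i-1},U|X_i}$. I expect this bookkeeping, rather than any conceptual difficulty, to be the main obstacle; everything else is the chain rule together with the two independence structures built into the construction.
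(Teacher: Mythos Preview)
Your proposal is correct and follows essentially the same approach as the paper. The paper's proof is only a sketch that cites \cite[Th.~1]{9457633} and highlights the identity $H(X_i\mid\bar{U}_i)=H(X_i\mid X_{i-1},\ldots,X_1,\bar{U})$; your argument makes exactly this chain-rule decomposition explicit, using \eqref{dorost2} to identify $I(X_i;\bar{U}_i)=I(X_i;X_1,\ldots,X_{i-1},U)$ and then invoking the product independence from Lemma~\ref{pare}(ii) to collapse $I(X;\bar{U})$ into $\sum_i I(X_i;\bar{U}_i)$.
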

 \begin{proof}
 	The proof is based on \eqref{dorost2} and the fact that $H(X_i|\bar{U}_i)=H(X_i|X_{i-1},..,X_1,\bar{U})$. By checking the proof of privacy leakage as stated in \cite[Th.~1]{9457633}, the assumption that $X$ is an element-wise deterministic function of $Y$ has not been used. Thus, the same proof can be used here.
 \end{proof}
 Next theorem is similar to \cite[Th.~1]{9457633} which helps us to derive upper bounds on privacy-utility trade-off defined in \eqref{main1}.
 \begin{theorem}\label{theorem1}
 		For any feasible $U$ in \eqref{main1}, there exists RV $U^*=(U^*_1,...,U^*_N)$ with conditional distribution $P_{U^*|Y}(u^*|y)=\prod_{i=1}^NP_{U^*_i|Y_i}(u^*_i|y_i)$ that obtain the same leakage as $U$,i.e, we have
 		\begin{align}\label{koon}
 		I(X;U^*)=I(X;U),
 		\end{align} 
 		and for all $i\in\{1,...,K\}$ it bounds the utility achieved by $U$ as follows
 		\begin{align}\label{koonkesh}
 		I(C_i;U)\leq I(C_i;U^*)+\min\{\Delta_i^1,\Delta_i^2\}, 
 		\end{align}
 		where
 		\begin{align*}
 		\Delta_i^1&=I(X;C_i)+\sum_{j:Y_i\in C_j}H(X_j|Y_j)\\&=\sum_{j:Y_i\in C_j} \left( I(X_j;Y_j)+H(X_j|Y_j)\right),\\
 		\Delta_i^2&=I(X;C_i)+\sum_{j:Y_i\in C_j}\left(\log(I(X_j;Y_j)+1)+4\right)\\&=\sum_{j:Y_i\in C_j} \left( I(X_j;Y_j)+\log(I(X_j;Y_j)+1)+4\right).
 		\end{align*}
 \end{theorem}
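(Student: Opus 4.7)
The plan is to adapt the construction in \cite[Th.~1]{9457633} to the present, more general setting where $X_i$ and $Y_i$ are arbitrarily correlated. Given any feasible $U$, I would first apply Lemma~\ref{pare} to obtain $\bar{U}$ satisfying the Markov chain $\bar{U}-X-(Y,U)$ together with the mutual independence of the triples $(\bar{U}_i,Y_i,X_i)$; combined with Lemma~\ref{pare2}, this yields $I(X;U)=I(X;\bar{U})=\sum_i I(X_i;\bar{U}_i)$.

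To construct $U^*$, I would apply Lemma~\ref{lemma1} (to obtain $\Delta_i^1$) or Lemma~\ref{lemma2} (to obtain $\Delta_i^2$) to each pair $(X_i,Y_i)$ to produce an auxiliary $V_i\perp X_i$ with $Y_i=g_i(V_i,X_i)$. Using this decomposition, I would define $U_i^*$ as a channel from $Y_i$ (with independent auxiliary randomness) engineered so that its joint with $X_i$ matches the joint $(X_i,\bar{U}_i)$, thereby achieving $I(X_i;U_i^*)=I(X_i;\bar{U}_i)$ coordinate-wise. The product form $P_{U^*|Y}=\prod_i P_{U_i^*|Y_i}$ then follows from the across-coordinate mutual independence in Lemma~\ref{pare}(ii), and summing the per-coordinate leakages gives \eqref{koon}.

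For the utility bound, I would show that for each user $i$,
\[
I(C_i;U) \leq I(C_i;U^*) + \sum_{j:\,Y_j\in C_i}\bigl[I(X_j;Y_j)+I(X_j;V_j|Y_j)\bigr],
\]
via a chain-rule decomposition that exploits the inequality $I(C_i;\bar{U})\leq I(C_i;X)=\sum_{j:\,Y_j\in C_i}I(X_j;Y_j)$ (which follows from the Markov chain $\bar{U}-X-C_i$, since $C_i$ is a subvector of $Y$, together with the coordinate-wise independence of the source). Under Lemma~\ref{lemma1}, $V_j$ may be taken as a function of $(X_j,Y_j)$, so $I(X_j;V_j|Y_j)\leq H(X_j|Y_j)$, yielding $\Delta_i^1$; under Lemma~\ref{lemma2} the SFRL bound gives $I(X_j;V_j|Y_j)\leq \log(I(X_j;Y_j)+1)+4$, yielding $\Delta_i^2$. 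Taking the minimum completes the argument.

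The hardest step is the construction of $U_i^*$: producing a kernel $P_{U_i^*|Y_i}$ whose induced joint with $X_i$ matches that of $(X_i,\bar{U}_i)$ while $U_i^*$ remains a function of $Y_i$ alone plus independent randomness. In \cite[Th.~1]{9457633} this step is almost immediate because $X_i$ is a deterministic function of $Y_i$; in the present generality the FRL/SFRL decomposition $Y_i\leftrightarrow(V_i,X_i)$ with $V_i\perp X_i$ is precisely what enables the simulation, and the residual $I(X_j;V_j|Y_j)$ appearing in the utility bound is the unavoidable price for relaxing the deterministic-function assumption.
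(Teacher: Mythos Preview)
Your high-level plan is right: build $\bar U$ via Lemma~\ref{pare}, use Lemma~\ref{pare2} for the leakage, then augment each coordinate with FRL/SFRL. The gap is in \emph{what} you feed into FRL/SFRL and, consequently, what $U_i^*$ actually is. The paper does \emph{not} apply Lemma~\ref{lemma1}/\ref{lemma2} to $(X_i,Y_i)$; it applies them to the pair $\bigl((\bar U_i,X_i),\,Y_i\bigr)$, obtaining $\tilde U_i$ with $I(\tilde U_i;\bar U_i,X_i)=0$ and $H(Y_i\mid X_i,\bar U_i,\tilde U_i)=0$, and then simply sets $U_i^*=(\tilde U_i,\bar U_i)$. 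That one move resolves what you call ``the hardest step'': independence of $\tilde U_i$ from $(\bar U_i,X_i)$ gives $I(X_i;U_i^*)=I(X_i;\bar U_i)$ immediately, and since $\{(\bar U_i,X_i,Y_i)\}_i$ are mutually independent (Lemma~\ref{pare}(ii)) one can build the $\tilde U_i$'s so that $\{(\tilde U_i,\bar U_i,X_i,Y_i)\}_i$ are mutually independent, which yields the product form $P_{U^*\mid Y}=\prod_i P_{U_i^*\mid Y_i}$ directly. By contrast, your $V_i$ (FRL on $(X_i,Y_i)$) is never tied to the construction of $U_i^*$, so there is no reason the residual $I(X_j;V_j\mid Y_j)$ should show up in the utility bound; the ``engineer a $Y_i$-channel whose joint with $X_i$ matches $(X_i,\bar U_i)$'' step is doing no work as stated.

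For the utility inequality, the paper writes both $I(C_i;U)$ and $I(C_i;U^*)$ via the identity $I(C_i;\cdot)=I(X;\cdot)+H(C_i\mid X)-H(C_i\mid X,\cdot)-I(X;\cdot\mid C_i)$, subtracts, and uses $H(C_i\mid X,U^*)=\sum_{j:Y_j\in C_i}H(Y_j\mid X_j,\bar U_j,\tilde U_j)=0$. The surviving residual is $\sum_{j:Y_j\in C_i}I(X_j;\bar U_j,\tilde U_j\mid Y_j)$, bounded by $H(X_j\mid Y_j)$ for $\Delta_i^1$ and, in the SFRL version, by $\log(I(X_j;Y_j)+1)+4$ via $I(\bar U_j,X_j;Y_j)=I(X_j;Y_j)$ from the Markov chain $\bar U_j-X_j-Y_j$. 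Your sketch based on $I(C_i;\bar U)\le I(C_i;X)$ does not by itself connect $I(C_i;U)$ to $I(C_i;U^*)$: equality of $I(X;U)$ and $I(X;\bar U)$ says nothing about $I(C_i;U)$ versus $I(C_i;\bar U)$, and it is the pair of matching identities above, together with $H(C_i\mid X,U^*)=0$ from the $((\bar U_i,X_i),Y_i)$-FRL, that closes this gap.
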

\begin{proof}
	The proof is provided in Appendix~A.
\end{proof}
\begin{remark}
	Theorem~\ref{theorem1} is an extension of \cite[Th.~1]{9457633} for correlated $X_i$ and $Y_i$, $\forall i\in\{1,...,N\}$. 
\end{remark}
Let $\delta_i=\min\{\Delta_i^1,\Delta_i^2\}$. By using Theorem~\ref{theorem1} we can decompose $I(U;Y)$ and $I(U;X)$ into $N$ parts and derive an upper bound on \eqref{main1} as follows. We show that  without loss of optimality we can replace the constraint $\sum_{i=1}^NI(X_i;U_i)\leq \epsilon$ by $N$ individual constraints using auxiliary variables $\{\epsilon_i\}_{i=1}^N$. 
\begin{lemma}\label{cool}
	\begin{align}
	h_{\epsilon}^K(P_{XY})&\leq \!\!\!\!\!\!\!\sup_{\begin{array}{c}\substack{\{P_{U_i|X_i,Y_i}\}_{i=1}^N:\\\sum_{i=1}^NI(X_i;U_i)\leq \epsilon}\end{array}}\!\!\sum_{i=1}^N\!\left(\sum_{j:Y_i\in C_j}\!\!\!\lambda_j\right)\!\!\left(I(U_i;Y_i)\!+\!\delta_i\right)\label{ame1}\\&=\!\!\!\!\!\!\!\sup_{\begin{array}{c}\substack{\{P_{U_i|X_i,Y_i},\\\epsilon_i\}_{i=1}^N:\\I(X_i;U_i)\leq\epsilon_i,\\ 0\leq\epsilon_i\leq \epsilon,\\ \sum_i \epsilon_i\leq \epsilon}\end{array}}\!\!\sum_{i=1}^N\!\left(\sum_{j:Y_i\in C_j}\!\!\!\lambda_j\right)\!\!\left(I(U_i;Y_i)\!+\!\delta_i\right)\label{nane1}.
	\end{align}
\end{lemma}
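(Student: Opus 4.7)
The plan is to apply Theorem~\ref{theorem1} to replace any feasible $U$ by a product-form surrogate $U^*$, exploit the resulting mutual independence to decompose every mutual information into a sum over useful-data components, and then settle the equality in \eqref{nane1} by a routine reparametrization of the privacy budget.

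First I would fix any feasible $U$ in \eqref{main1} and invoke Theorem~\ref{theorem1} to obtain $U^*=(U_1^*,\ldots,U_N^*)$ with $P_{U^*|Y}(u^*|y)=\prod_{i=1}^{N}P_{U_i^*|Y_i}(u_i^*|y_i)$, $I(X;U^*)=I(X;U)\leq\epsilon$, and a per-user slack bounding $I(C_j;U)-I(C_j;U^*)$. Because $\{(X_i,Y_i)\}_{i=1}^{N}$ are mutually independent and $P_{U^*|Y}$ factorizes, the triples $\{(X_i,Y_i,U_i^*)\}_{i=1}^{N}$ are mutually independent. This yields $I(X;U^*)=\sum_{i=1}^{N}I(X_i;U_i^*)$ and, since dropping components of $U^*$ that are independent of $C_j$ does not change the mutual information, $I(C_j;U^*)=\sum_{i:\,Y_i\in C_j}I(Y_i;U_i^*)$ for each user $j$.

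Next I would substitute into the objective and swap the order of summation. Writing the per-user slack in its per-useful-data form (obtained by tracking the FRL/SFRL contribution of each index $i$ separately inside the construction proving Theorem~\ref{theorem1}, and choosing the tighter of the two component-wise), I get
\begin{align*}
\sum_{j=1}^{K}\lambda_j I(C_j;U)\;&\leq\;\sum_{j=1}^{K}\lambda_j\!\!\!\sum_{i:\,Y_i\in C_j}\!\!\bigl(I(Y_i;U_i^*)+\delta_i\bigr)\\
&=\sum_{i=1}^{N}\Bigl(\sum_{j:\,Y_i\in C_j}\lambda_j\Bigr)\bigl(I(Y_i;U_i^*)+\delta_i\bigr).
\end{align*}
Since each kernel $P_{U_i^*|Y_i}$ is a legitimate choice of $P_{U_i|X_i,Y_i}$ (one that happens to be constant in $X_i$) and the collection satisfies $\sum_{i}I(X_i;U_i^*)\leq\epsilon$, the tuple $(U_1^*,\ldots,U_N^*)$ is admissible for the supremum in \eqref{ame1}, which delivers the claimed upper bound after taking the supremum over $U$.

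For the equality \eqref{ame1}$=$\eqref{nane1}, the direction $\leq$ follows by taking $\epsilon_i:=I(X_i;U_i)$ for any admissible tuple of \eqref{ame1}; these values are nonnegative, each is at most $\sum_{k}I(X_k;U_k)\leq\epsilon$, and their sum is at most $\epsilon$, so the same tuple is admissible in \eqref{nane1} with identical objective. Conversely, any $(\{P_{U_i|X_i,Y_i}\},\{\epsilon_i\})$ admissible in \eqref{nane1} obeys $\sum_{i}I(X_i;U_i)\leq\sum_{i}\epsilon_i\leq\epsilon$, hence is admissible in \eqref{ame1}. The main obstacle I anticipate is bookkeeping: keeping user indices and useful-data indices distinct through the double-sum swap, and justifying the refinement of the per-user slack of Theorem~\ref{theorem1} into its per-useful-data form so that the $\delta_i$ inside \eqref{ame1} is exactly the component-wise minimum of the FRL and SFRL contributions and not the looser min-of-sums used in the statement of Theorem~\ref{theorem1}.
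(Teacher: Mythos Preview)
Your proposal is correct and takes essentially the same route as the paper: invoke Theorem~\ref{theorem1}, exploit the product structure of $U^*$ to decompose $I(C_j;U^*)$ and $I(X;U^*)$ into per-component terms, swap the $i,j$ sums, and establish the equality \eqref{ame1}$=$\eqref{nane1} via the budget reparametrization $\epsilon_i=I(X_i;U_i)$. You are in fact more careful than the paper on two points---the per-useful-data refinement of $\delta_i$ (which the paper leaves implicit) and not assuming the supremum in \eqref{ame1} is attained (the paper's argument fixes a maximizer $\{\bar U_i\}$)---so your proof is a slightly cleaner version of the same idea.
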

\begin{proof}
	For any feasible $\{U_i\}_{i=1}^N$ satisfying the constraints in \eqref{nane1}, we have $\sum_{i=1}^NI(X_i;U_i)\leq \epsilon$. Hence, $\eqref{nane1}\leq\eqref{ame1}$. Now let $\{\bar{U}_i\}_{i=1}^N$ be the maximizer of \eqref{ame1}. In this case we can replace the constraint $\sum_{i=1}^NI(X_i;U_i)\leq \epsilon$ by $I(X;U_i)\leq I(X;\bar{U}_i), \forall i$. We have
	\begin{align}
	&\sup_{\begin{array}{c}\substack{\{P_{U_i|X_i,Y_i}\}_{i=1}^N:\\\sum_{i=1}^NI(X_i;U_i)\leq \epsilon}\end{array}}\!\!\sum_{i=1}^N\!\left(\sum_{j:Y_i\in C_j}\!\!\!\lambda_j\right)\!\!\left(I(U_i;Y_i)\!+\!\delta_i\right)\nonumber\\&\stackrel{(a)}{=}\!\!\!\!\!\!\!\sup_{\begin{array}{c}\substack{\{P_{U_i|X_i,Y_i}\}_{i=1}^N:\\I(X_i;U_i)\leq I(X_i;\bar{U}_i),\\ \sum_i I(X_i;\bar{U}_i)\leq \epsilon}\end{array}}\!\!\sum_{i=1}^N\!\left(\sum_{j:Y_i\in C_j}\!\!\!\lambda_j\right)\!\!\left(I(U_i;Y_i)\!+\!\delta_i\right)\label{koonmoon}\\&\leq \!\!\!\!\!\!\!\sup_{\begin{array}{c}\substack{\{P_{U_i|X_i,Y_i},\\\epsilon_i\}_{i=1}^N:\\I(X_i;U_i)\leq\epsilon_i,\\ 0\leq\epsilon_i\leq \epsilon,\\ \sum_i \epsilon_i\leq \epsilon}\end{array}}\!\!\sum_{i=1}^N\!\left(\sum_{j:Y_i\in C_j}\!\!\!\lambda_j\right)\!\!\left(I(U_i;Y_i)\!+\!\delta_i\right)\nonumber.
	\end{align}
	Step (a) holds since the maximizer of \eqref{ame1} satisfies the constraints in \eqref{koonmoon} and the fact that the constraints in \eqref{koonmoon} also results in $\sum_{i=1}^NI(X_i;U_i)\leq \epsilon$. The last inequality states that $\eqref{ame1}\leq\eqref{nane1}$.
	Thus, $\eqref{nane1}=\eqref{ame1}$.
\end{proof}
 Note that we can maximize \eqref{nane1} in two phases. First, we fix $\epsilon_i$ and decompose \eqref{nane1} into $N$ problems and find an upper bound for each of them. We then show that the upper bound can be written as a linear program over $\{\epsilon_i\}_{i=1}^N$ and we derive the final upper bound.
  For fixed $\epsilon_i$, $i\in\{1,...,N\}$, \eqref{nane1} can be decomposed into $N$ parallel privacy problems, where each privacy problem can be stated as follows
 \begin{align}\label{sd}
 h_{\epsilon}^i(P_{X_iY_i})=\sup_{\begin{array}{c} 
 	\substack{P_{U|Y,X}: I(U_i;X_i)\leq\epsilon_i,}
 	\end{array}}I(Y_i;U_i).
 \end{align}
 The last argument holds since the terms $\sum_{j:Y_i\in C_j}\lambda_j$ and $\sum_{i=1}^N\sum_{j:Y_i\in C_j}\lambda_j\delta_i$ are constant. 
 The problem \eqref{sd} has been studied in \cite{zamani2022bounds} and \cite{kostala}, where an upper bound has been derived. Using \cite[Lemma~8]{zamani2022bounds} we have
 \begin{align}\label{sm}
 h_{\epsilon}^i(P_{X_iY_i})\leq H(Y_i|X_i)+\epsilon_i, \ \forall i\in\{1,...,N\}.
 \end{align} 
 As shown in \cite[Corollary~2]{zamani2022bounds} the upper bound in \eqref{sm} is tight if $X_i$ is a deterministic function of $Y_i$. In the next result, by using Lemma~\ref{cool} and \eqref{sm} we obtain the final upper bound. 
 \begin{theorem}\label{theorem2}
 	For any $(X,Y)$ distributed according $P_{XY}(x,y)=\prod_{i=1}^N P_{X_i,Y_i}(x_i,y_i)$ and any $\epsilon < I(X;Y)$ we have
 	\begin{align}\label{jaleb}
 	&h_{\epsilon}^K(P_{XY})\!\leq\nonumber\\ &\epsilon\max_i\left\{\sum_{j:Y_i\in C_j}\!\!\!\lambda_j\right\}\!+\!\sum_{i=1}^N\!\left(\sum_{j:Y_i\in C_j}\!\!\!\lambda_j\right)\!\!\left(H(Y_i|X_i)\!+\!\delta_i\right),
 	\end{align}
 	where $\delta_i=\min\{\Delta_i^1,\Delta_i^2\}$, $\Delta_i^1$ and $\Delta_i^2$ are defined in Theorem~\ref{theorem1}.
\end{theorem}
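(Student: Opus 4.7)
The plan is to chain together Lemma~\ref{cool}, the single-letter bound \eqref{sm}, and a simple linear-program argument over the auxiliary leakage split $\{\epsilon_i\}_{i=1}^N$. Concretely, I would first invoke Lemma~\ref{cool} to pass from $h_{\epsilon}^K(P_{XY})$ to the two-phase supremum in \eqref{nane1}, so that the constraint $I(X;U)\leq\epsilon$ has already been replaced by the per-letter constraints $I(X_i;U_i)\leq\epsilon_i$ with $0\leq\epsilon_i\leq\epsilon$ and $\sum_i\epsilon_i\leq\epsilon$.

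Next, holding $\{\epsilon_i\}$ fixed, the objective in \eqref{nane1} decouples across $i$ because the coefficients $\sum_{j:Y_i\in C_j}\lambda_j$ and the constants $\delta_i$ are independent of the mechanisms $P_{U_i|X_i,Y_i}$. For each $i$, the inner supremum is exactly $h_{\epsilon_i}^i(P_{X_iY_i})$ as defined in \eqref{sd}. Applying the single-letter upper bound \eqref{sm} from \cite[Lemma~8]{zamani2022bounds}, I would substitute
\begin{equation*}
\sup_{P_{U_i|X_i,Y_i}:\,I(U_i;X_i)\leq\epsilon_i} I(U_i;Y_i)\;\leq\; H(Y_i|X_i)+\epsilon_i.
\end{equation*}
Plugging this into \eqref{nane1}, the constant part $\sum_{i=1}^N\bigl(\sum_{j:Y_i\in C_j}\lambda_j\bigr)(H(Y_i|X_i)+\delta_i)$ pops out of the supremum, and what remains is a linear program
\begin{equation*}
\sup_{\substack{0\leq\epsilon_i\leq\epsilon\\\sum_i\epsilon_i\leq\epsilon}}\sum_{i=1}^N w_i\,\epsilon_i,\qquad w_i\coloneqq\sum_{j:Y_i\in C_j}\lambda_j\geq 0.
\end{equation*}

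Finally, because the objective is linear and the $w_i$ are nonnegative, the optimum of this LP is attained at a vertex of the feasible set: place all of the budget on the coordinate $i^{\star}\in\arg\max_i w_i$, i.e., $\epsilon_{i^{\star}}=\epsilon$ and $\epsilon_i=0$ for $i\neq i^{\star}$. The maximum value is therefore $\epsilon\max_i w_i=\epsilon\max_i\{\sum_{j:Y_i\in C_j}\lambda_j\}$. Adding back the constant part yields exactly \eqref{jaleb}. There is no real obstacle here; the only thing to be careful about is that the constraint $\sum_i\epsilon_i\leq\epsilon$ combined with $\epsilon_i\geq 0$ (not the box constraint $\epsilon_i\leq\epsilon$) is what actually binds in the LP, so the conclusion of the LP step is clean and does not depend on the individual upper bounds $\epsilon_i\leq\epsilon$ from Lemma~\ref{cool}.
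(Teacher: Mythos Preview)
Your proposal is correct and follows essentially the same route as the paper: invoke Lemma~\ref{cool} (which already encodes Theorem~\ref{theorem1}) to reach the two-phase supremum \eqref{nane1}, bound each inner problem by \eqref{sm}, and then solve the residual linear program by allocating all of $\epsilon$ to the index with the largest weight $\sum_{j:Y_i\in C_j}\lambda_j$. The paper's proof is exactly this chain, stated in a single display.
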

 \begin{proof}
 	The proof directly follows from Theorem~\ref{theorem1}, Lemma~\ref{cool} and \eqref{sm}. We have
 	\begin{align*}
 	&h_{\epsilon}^K(P_{XY})\!\leq\!\!\!\!\!\!\! \sup_{\begin{array}{c}\substack{\{\epsilon_i\}_{i=1}^N:\\ 0\leq\epsilon_i\leq \epsilon,\\ \sum_i \epsilon_i\leq \epsilon}\end{array}}\!\!\!\sum_{i=1}^N\!\left(\sum_{j:Y_i\in C_j}\!\!\!\lambda_j\right)\!\!\left(H(Y_i|X_i)\!+\!\epsilon_i\!+\!\delta_i\right)\\&=\epsilon\max_i\left\{\sum_{j:Y_i\in C_j}\!\!\!\lambda_j\right\}\!+\!\sum_{i=1}^N\!\left(\sum_{j:Y_i\in C_j}\!\!\!\lambda_j\right)\!\!\left(H(Y_i|X_i)\!+\!\delta_i\right).
 	\end{align*}
 	The last line follows since the terms $\sum_{j:Y_i\in C_j}\lambda_j$, $\sum_{i=1}^N\sum_{j:Y_i\in C_j}\lambda_j\delta_i$ and $\sum_{i=1}^N\sum_{j:Y_i\in C_j}\lambda_jH(X_i|Y_i)$ are constant, thus, we choose $\epsilon_i=\epsilon$ with the largest coefficient. 
 \end{proof}
 \subsection{Lower bounds on $\bm{h_{\epsilon}^K(P_{XY})}$:}
 To find the first lower bound, let us construct $\tilde{U}=(\tilde{U}_1,..,\tilde{U}_N)$ as follows: 
 For all $i\in\{1,...,N\}$, let $\tilde{U}_i$ be the RV found by the EFRL in \cite[Lemma~3]{zamani2022bounds} using $X\leftarrow X_i$, $Y\leftarrow Y_i$ and $\epsilon\leftarrow \epsilon_i$. Thus, we have
 \begin{align}\label{kk}
 I(\tilde{U}_i;X_i)=\epsilon_i,
 \end{align}
 and
 \begin{align}\label{mm}
 H(Y_i|X_i,\tilde{U}_i)=0,
 \end{align}
 where $0\leq\epsilon_i\leq \epsilon$. 
 The utility attained by $\tilde{U}_i$ can be lower bounded as follows
 \begin{align}\label{ja}
 &I(\tilde{U}_i;Y_i)\nonumber\\
 &=I(\tilde{U}_i;X_i)+H(Y_i|X_i)-I(X;\tilde{U}_i|Y_i)-H(Y_i|\tilde{U}_i,X_i)\nonumber\\
 &\geq H(Y_i|X_i)-H(X_i|Y_i)+\epsilon_i, 
 \end{align}
 where in the last line we used \eqref{kk}, \eqref{mm} and $I(X;\tilde{U}_i|Y_i)\leq H(X_i|Y_i)$.
 Furthermore, similar to the upper bounds, we can construct $\tilde{U}$ so that $\{(\tilde{U}_i,Y_i,X_i)\}_{i=1}^N$ become mutually independent. Thus, we can find a first lower bound for $h_{\epsilon}^K(P_{XY})$ as in the next lemma.
 \begin{lemma}\label{lem4}
 	For any $(X,Y)$ distributed according $P_{XY}(x,y)=\prod_{i=1}^N P_{X_iY_i}(x_i,y_i)$ and any $\epsilon < I(X;Y)$ we have
 	\begin{align}\label{jaleb1}
 	&h_{\epsilon}^K(P_{XY})\!\geq\nonumber\\ &\epsilon\max_i\!\left\{\!\sum_{j:Y_i\in C_j}\!\!\!\!\lambda_j\!\right\}\!+\!\sum_{i=1}^N\!\left(\sum_{j:Y_i\in C_j}\!\!\!\!\lambda_j\!\!\right)\!\!\left(H(Y_i|X_i)\!-\!H(X_i|Y_i)\!\right).
 	\end{align}
 \end{lemma}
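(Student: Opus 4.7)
The plan is to verify that the random variable $\tilde{U}=(\tilde{U}_1,\dots,\tilde{U}_N)$ constructed in the paragraph preceding the lemma is feasible for \eqref{main1} and then compute (a lower bound on) its weighted sum-utility, finally choosing the allocation $\{\epsilon_i\}$ that maximizes the resulting expression.

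First I would check feasibility. By the mutual independence of $\{(\tilde{U}_i,Y_i,X_i)\}_{i=1}^N$ (noted just before the lemma), and since $X_i$ are mutually independent, the pairs $\{(\tilde{U}_i,X_i)\}$ are mutually independent across $i$. Hence $I(X;\tilde{U})=\sum_{i=1}^N I(X_i;\tilde{U}_i)=\sum_{i=1}^N \epsilon_i$ by \eqref{kk}. Thus choosing $\{\epsilon_i\}$ with $0\le \epsilon_i\le\epsilon$ and $\sum_i\epsilon_i\le\epsilon$ makes $\tilde{U}$ feasible in \eqref{main1}.

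Next I would lower-bound the utility. Since each $C_j$ is a sub-vector of $Y$, write $C_j=(Y_i)_{i:Y_i\in C_j}$. By the mutual independence of $\{(\tilde{U}_i,Y_i)\}$, the pairs $((\tilde{U}_i,Y_i))_{i:Y_i\in C_j}$ are independent, which yields
\begin{align*}
I(C_j;\tilde{U}) \;\ge\; I\!\left(C_j;(\tilde{U}_i)_{i:Y_i\in C_j}\right) \;=\; \sum_{i:Y_i\in C_j} I(Y_i;\tilde{U}_i).
\end{align*}
Summing with weights $\lambda_j$ and swapping the order of summation gives
\begin{align*}
\sum_{j=1}^K \lambda_j I(C_j;\tilde{U}) \;\ge\; \sum_{i=1}^N\!\left(\sum_{j:Y_i\in C_j}\lambda_j\right)\! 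I(Y_i;\tilde{U}_i).
\end{align*}
Plugging in the per-letter utility bound \eqref{ja}, namely $I(\tilde{U}_i;Y_i)\ge H(Y_i|X_i)-H(X_i|Y_i)+\epsilon_i$, gives a lower bound that is affine in $\{\epsilon_i\}$.

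Finally I would optimize the allocation. The coefficient of $\epsilon_i$ in the resulting bound is $\sum_{j:Y_i\in C_j}\lambda_j\ge 0$, so the linear program
\begin{align*}
\sup\Big\{\sum_i a_i\epsilon_i : 0\le\epsilon_i\le\epsilon,\; \sum_i\epsilon_i\le\epsilon\Big\}
\end{align*}
with $a_i=\sum_{j:Y_i\in C_j}\lambda_j$ is maximized by placing all the budget on an index achieving $\max_i a_i$, yielding $\epsilon\cdot\max_i\{\sum_{j:Y_i\in C_j}\lambda_j\}$. Combining this with the constant term $\sum_i(\sum_{j:Y_i\in C_j}\lambda_j)(H(Y_i|X_i)-H(X_i|Y_i))$ produces exactly the right-hand side of \eqref{jaleb1}. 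The only nontrivial step is ensuring that the mutual independence property of the $\tilde{U}_i$ can indeed be imposed while still guaranteeing both \eqref{kk} and \eqref{mm} per index; this is inherited from the construction in \cite[Lemma~3]{zamani2022bounds} applied independently per coordinate, exploiting that $\{(X_i,Y_i)\}_{i=1}^N$ are mutually independent.
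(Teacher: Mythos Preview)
Your proof is correct and follows essentially the same approach as the paper: the paper's argument is simply ``the proof follows from \eqref{ja} and the fact that $\{(\tilde{U}_i,Y_i,X_i)\}_{i=1}^N$ are independent over different $i$'s; we choose $\epsilon_i=\epsilon$ with the largest coefficient.'' You have made explicit the feasibility check $I(X;\tilde U)=\sum_i\epsilon_i$, the decomposition $I(C_j;\tilde U)\ge\sum_{i:Y_i\in C_j}I(Y_i;\tilde U_i)$, and the linear-program optimization over $\{\epsilon_i\}$, all of which are implicit in the paper's one-line proof.
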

 \begin{proof}
 	The proof follows from \eqref{ja} and the fact that $\{(\tilde{U}_i,Y_i,X_i)\}_{i=1}^N$ are independent over different $i$'s. We choose $\epsilon_i=\epsilon$ with the largest coefficient.
 \end{proof}
\begin{remark}\label{isha}
	The privacy mechanism design that attains the lower bound in Lemma~\ref{lem4} asserts that we only release information about $X_i$ where $i=\arg \max_m \left\{\sum_{j:Y_m\in C_j}\lambda_j\right\}$ and the leakage from $X_i$ is equal to $\epsilon$ (maximum allowed leakage). In other words, the disclosed data that achieves \eqref{jaleb1} satisfies $I(U;X_i)=\epsilon$ and $I(U;X_j)=0, \ \forall j\neq i$, where $i$ is defined earlier. 
\end{remark}
 To find the second lower bound, let us construct $U'=(U'_1,..,U'_N)$ as follows:
 For all $i\in\{1,...,N\}$, let $U'_i$ be the RV found by the ESFRL in \cite[Lemma~4]{zamani2022bounds} using $X\leftarrow X_i$, $Y\leftarrow Y_i$ and $\epsilon\leftarrow \epsilon_i$. Thus, we have
 \begin{align}
 I(U'_i;X_i)&=\epsilon_i,\label{kkk}\\
 H(Y_i|X_i,U'_i)&=0,\label{kkkk}
 \end{align}
 and 
 \begin{align}
 I(X_i;U'_i|Y_i)&\leq \alpha_iH(X_i|Y_i)\nonumber\\&+(1-\alpha_i)\left(\log(I(X_i;Y_i)+1)+4\right),\label{kkkkk}
 \end{align}
 where $\alpha_i=\frac{\epsilon_i}{H(X_i)}$.
The utility attained by $U'_i$ can be lower bounded as follows
\begin{align}\label{jaa}
&I(U'_i;Y_i)=\nonumber\\ 
&I(\tilde{U}_i;X_i)+H(Y_i|X_i)-I(X;\tilde{U}_i|Y_i)-H(Y_i|\tilde{U}_i,X_i)\nonumber
\\&\geq H(Y_i|X_i)-\alpha_iH(X_i|Y_i)+\epsilon_i\nonumber\\&-(1-\alpha_i)\left(\log(I(X_i;Y_i)+1)+4\right), 
\end{align}
where in the last line we used \eqref{kkk}, \eqref{kkkk} and \eqref{kkkkk}.
 Furthermore, similar to the upper bounds, we can construct $U'$ so that $\{(U'_i,Y_i,X_i)\}_{i=1}^N$ become mutually independent. Hence, we can find the second lower bound for $h_{\epsilon}^K(P_{XY})$ as follows.
 \begin{lemma}\label{lem5}
 	Let $\beta_i=H(Y_i|X_i)-\alpha_iH(X_i|Y_i)+\epsilon_i-(1-\alpha_i)\left(\log(I(X_i;Y_i)+1)+4\right)$ and $\alpha_i=\frac{\epsilon_i}{H(X_i)}$. For any $(X,Y)$ distributed according $P_{XY}(x,y)=\prod_{i=1}^N P_{X_iY_i}(x_i,y_i)$ and any $\epsilon < I(X;Y)$ we have
 	\begin{align}\label{jaleb2}
 	&h_{\epsilon}^K(P_{XY})\!\geq \sup_{\begin{array}{c}\substack{\{\epsilon_i\}_{i=1}^N:\\ 0\leq\epsilon_i\leq \epsilon,\\ \sum_i \epsilon_i\leq \epsilon}\end{array}}\sum_{i=1}^N\left(\sum_{j:Y_i\in C_j}\!\!\!\lambda_j\right)\beta_i.
 	\end{align}
 \end{lemma}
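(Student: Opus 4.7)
The plan is to mimic the proof of Lemma~\ref{lem4}, replacing the EFRL construction by the ESFRL construction whose additional conditional-leakage guarantee \eqref{kkkkk} has already been recorded. Concretely, I would fix any nonnegative $\{\epsilon_i\}_{i=1}^N$ with $\epsilon_i\le\epsilon$ and $\sum_i\epsilon_i\le\epsilon$, and for each $i$ invoke \cite[Lemma~4]{zamani2022bounds} on the pair $(X_i,Y_i)$ with leakage budget $\epsilon_i$ to obtain a RV $U'_i$ satisfying \eqref{kkk}, \eqref{kkkk}, and \eqref{kkkkk}. I would then form $U'=(U'_1,\ldots,U'_N)$ with the product kernel $P_{U'|X,Y}(u'|x,y)=\prod_{i=1}^N P_{U'_i|X_i,Y_i}(u'_i|x_i,y_i)$, which together with the hypothesis $P_{XY}=\prod_i P_{X_iY_i}$ makes the triples $\{(U'_i,X_i,Y_i)\}_{i=1}^N$ mutually independent, exactly as in the Lemma~\ref{lem4} construction.

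Next, I would verify feasibility. By the mutual independence just established,
\begin{align*}
I(X;U') \;=\; \sum_{i=1}^N I(X_i;U'_i) \;=\; \sum_{i=1}^N \epsilon_i \;\leq\; \epsilon,
\end{align*}
so $U'$ is feasible in \eqref{main1}. For the utility term, since each $C_j$ is a sub-vector of $Y$ and the triples are independent across $i$, we obtain
\begin{align*}
I(C_j;U') \;=\; \sum_{i:\,Y_i\in C_j} I(Y_i;U'_i),
\end{align*}
and swapping the order of summation gives
\begin{align*}
\sum_{j=1}^K \lambda_j I(C_j;U') \;=\; \sum_{i=1}^N \Bigl(\sum_{j:\,Y_i\in C_j}\lambda_j\Bigr) I(Y_i;U'_i).
\end{align*}
Applying the per-coordinate bound \eqref{jaa}, namely $I(Y_i;U'_i)\ge \beta_i$, and finally taking the supremum over all admissible $\{\epsilon_i\}_{i=1}^N$ yields the lower bound \eqref{jaleb2}.

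The only subtlety is the independence argument for the joint construction: one has to check that defining $U'_i$ conditionally on $(X_i,Y_i)$ alone (not the full $(X,Y)$) is legitimate, and that the resulting $U'$ is a valid conditional kernel $P_{U'|X,Y}$. This is handled exactly as in the paragraph preceding Lemma~\ref{lem4}, where the same product-kernel construction was used for $\tilde{U}$; the argument transfers verbatim because the ESFRL output $U'_i$ has the same structural form as the EFRL output $\tilde{U}_i$, with only the bound \eqref{kkkkk} being different. Everything else is a bookkeeping combination of \eqref{kkk}--\eqref{kkkkk} with the independence decomposition, so I do not expect any obstacle beyond carefully collecting these ingredients.
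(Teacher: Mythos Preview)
Your proposal is correct and follows essentially the same approach as the paper: construct $U'=(U'_1,\ldots,U'_N)$ via ESFRL on each coordinate, use the product-kernel/independence structure to decompose both the leakage and the weighted utility into per-coordinate terms, apply the bound \eqref{jaa}, and optimize over $\{\epsilon_i\}$. The paper's own proof is a one-line pointer to \eqref{jaa} and the independence of $\{(U'_i,X_i,Y_i)\}_{i=1}^N$, so you have simply filled in the bookkeeping that the paper leaves implicit.
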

 \begin{proof}
 	The proof follows from \eqref{jaa} and the fact that $\{(\tilde{U}_i,Y_i,X_i)\}_{i=1}^N$ are independent over different $i$'s. 
 \end{proof}
 Note that the right hand side of \eqref{jaleb2} is a linear program which can be rewritten as follows: Let $\gamma_i=1-\frac{H(X_i|Y_i)}{H(X_i)}+\frac{\log(I(X_i;Y_i)+1)+4}{H(X_i)}$ and $\mu_i=\sum_{j:Y_i\in C_j}\lambda_j$. We have
 \begin{align*}
 &\sum_{i=1}^N\left(\sum_{j:Y_i\in C_j}\!\!\!\lambda_j\right)\beta_i\!\\&=\!\sum_{i=1}^N \mu_i\left(H(Y_i|X_i)\!-\!(\log(I(X_i;Y_i)\!+\!1)\!+\!4)\right)\!+\!\mu_i\gamma_i\epsilon_i
 \end{align*} 
 Thus,
 \begin{align}\label{tala}
 \sup_{\begin{array}{c}\substack{\{\epsilon_i\}_{i=1}^N:\\ 0\leq\epsilon_i\leq \epsilon,\nonumber\\ \sum_i \epsilon_i\leq \epsilon}\end{array}}&\sum_{i=1}^N\left(\sum_{j:Y_i\in C_j}\!\!\!\lambda_j\right)\beta_i\\=\sum_{i=1}^N &\mu_i\left(H(Y_i|X_i)\!-\!(\log(I(X_i;Y_i)\!+\!1)\!+\!4)\right)\nonumber\\+&\max_i\{\mu_i\gamma_i\}\epsilon.
 \end{align}
 Where in the last line we choose $\epsilon_i=\epsilon$ with the largest coefficient.
 \begin{theorem}\label{theorem3}
 	Let $\gamma_i=1-\frac{H(X_i|Y_i)}{H(X_i)}+\frac{\log(I(X_i;Y_i)+1)+4}{H(X_i)}$ and $\mu_i=\sum_{j:Y_i\in C_j}\lambda_j$. For any $(X,Y)$ distributed according $P_{XY}(x,y)=\prod_{i=1}^N P_{X_i,Y_i}(x_i,y_i)$ and any $\epsilon < I(X;Y)$ we have
 	\begin{align*}
 	h_{\epsilon}^K(P_{XY})\geq \max\{L_{\epsilon}^1,L_{\epsilon}^2\},
 	\end{align*}
 	where
 	\begin{align*}
 	&L_{\epsilon}^1=\epsilon\max_i\{\mu_i\}+\sum_{i=1}^N\mu_i\left(H(Y_i|X_i)-H(X_i|Y_i)\right),\\
 	&L_{\epsilon}^2=\\&\sum_{i=1}^N\! \mu_i\left(H(Y_i|X_i)\!-\!(\log(I(X_i;\!Y_i)\!+\!1)\!+\!4)\right)\nonumber\!+\!\max_i\{\mu_i\gamma_i\}\epsilon.
 	\end{align*}
 \end{theorem}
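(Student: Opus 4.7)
The plan is to recognize that $L_{\epsilon}^1$ and $L_{\epsilon}^2$ are exactly the two lower bounds already obtained in Lemma~\ref{lem4} and Lemma~\ref{lem5}, after the latter is simplified by solving its associated linear program. Since both $L_{\epsilon}^1$ and $L_{\epsilon}^2$ individually lower bound $h_{\epsilon}^K(P_{XY})$, their maximum does too, and the theorem follows immediately.

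First, I would invoke Lemma~\ref{lem4} verbatim: its right-hand side matches $L_{\epsilon}^1$ line by line once one notes that $\mu_i=\sum_{j:Y_i\in C_j}\lambda_j$, so no further work is required for this branch. Second, I would start from Lemma~\ref{lem5},
\begin{align*}
h_{\epsilon}^K(P_{XY})\geq \sup_{\{\epsilon_i\}:\,0\leq\epsilon_i\leq\epsilon,\,\sum_i\epsilon_i\leq\epsilon}\sum_{i=1}^N\mu_i\beta_i,
\end{align*}
and carry out the algebraic simplification already displayed in the paper right after the lemma: substituting $\alpha_i=\epsilon_i/H(X_i)$ into the definition of $\beta_i$, separating the part that depends on $\epsilon_i$ from the rest, and using the definition of $\gamma_i$ turns the objective into the constant $\sum_i\mu_i\bigl(H(Y_i|X_i)-(\log(I(X_i;Y_i)+1)+4)\bigr)$ plus the linear functional $\sum_i\mu_i\gamma_i\epsilon_i$. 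Since the variables $\epsilon_i$ are coupled only through $0\leq\epsilon_i\leq\epsilon$ and $\sum_i\epsilon_i\leq\epsilon$, the resulting linear program in $\{\epsilon_i\}$ is maximized by concentrating the entire budget $\epsilon$ on the coordinate with the largest coefficient $\mu_i\gamma_i$, which yields precisely $L_{\epsilon}^2$.

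The combination step is immediate: both $L_{\epsilon}^1$ and $L_{\epsilon}^2$ are valid lower bounds on $h_{\epsilon}^K(P_{XY})$, so their maximum is a lower bound as well. There is no deep obstacle in this argument; the only point to watch is that if some $\mu_i\gamma_i$ happen to be negative, the corresponding $\epsilon_i$ should be set to zero in the linear program, so that $\max_i\{\mu_i\gamma_i\}\epsilon$ is to be read with the implicit restriction to indices for which this quantity is nonnegative. Since it suffices to exhibit any single feasible $\{\epsilon_i\}$ to obtain a lower bound, the maximization argument remains valid in this corner case as well.
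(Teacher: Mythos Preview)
Your proposal is correct and follows essentially the same approach as the paper: the paper's proof simply states that the result follows directly from Lemma~\ref{lem4} (giving $L_{\epsilon}^1$) and the display \eqref{tala} (the solved linear program from Lemma~\ref{lem5}, giving $L_{\epsilon}^2$), and you have reproduced exactly this reasoning with slightly more detail on the linear-program step. Your remark about possibly negative coefficients $\mu_i\gamma_i$ is a sensible clarification that the paper leaves implicit.
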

 \begin{proof}
 	The proof directly follows from Lemma~\ref{lem4} and \eqref{tala}.
 \end{proof}
\begin{remark}\label{isha2}
	Similar to Remark~\ref{isha}, the disclosed data that attains the lower bound \eqref{tala} only releases the information of $X_i$ with leakage equal to $\epsilon$, where $i=\arg \max_m \left\{\mu_m\gamma_m\right\}=\arg \max_m \left\{\left(\sum_{j:Y_i\in C_j}\lambda_j\right)\gamma_m\right\}$ and $\gamma_m=1-\frac{H(X_m|Y_i)}{H(X_m)}+\frac{\log(I(X_m;Y_m)+1)+4}{H(X_m)}$. In this case, we have $I(U;X_i)=\epsilon$ and $U$ is independent of $X_j$ for $j\neq i$. 
\end{remark}
\begin{corollary}\label{tight}
	The upper bound in \eqref{jaleb} is tight within at most $\sum_{i=1}^N \left(\sum_{j:Y_i\in C_j}\lambda_j\right)(\delta_i+H(X_i|Y_i))$ nats, since by using Theorem~\ref{theorem2} and Theorem~\ref{theorem3} the distance between the lower bound $L_{\epsilon}^1$ and the upper bound in \eqref{jaleb} is $\sum_{i=1}^N \left(\sum_{j:Y_i\in C_j}\lambda_j\right)(\delta_i+H(X_i|Y_i))$ nats.
\end{corollary}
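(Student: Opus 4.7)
The plan is essentially a direct subtraction of the two bounds already proved. First I would recall that $\mu_i = \sum_{j:Y_i\in C_j}\lambda_j$, so that Theorem~\ref{theorem2} yields
\begin{equation*}
h_{\epsilon}^K(P_{XY}) \leq \epsilon \max_i\{\mu_i\} + \sum_{i=1}^N \mu_i\bigl(H(Y_i|X_i) + \delta_i\bigr),
\end{equation*}
while Theorem~\ref{theorem3} provides in particular the lower bound
\begin{equation*}
h_{\epsilon}^K(P_{XY}) \geq L_\epsilon^1 = \epsilon \max_i\{\mu_i\} + \sum_{i=1}^N \mu_i\bigl(H(Y_i|X_i) - H(X_i|Y_i)\bigr).
\end{equation*}

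Next I would subtract the lower bound from the upper bound. The $\epsilon \max_i\{\mu_i\}$ term and the $\sum_i \mu_i H(Y_i|X_i)$ term cancel exactly, leaving
\begin{equation*}
\sum_{i=1}^N \mu_i\bigl(\delta_i + H(X_i|Y_i)\bigr) = \sum_{i=1}^N \Bigl(\sum_{j:Y_i\in C_j}\lambda_j\Bigr)\bigl(\delta_i + H(X_i|Y_i)\bigr),
\end{equation*}
which matches the expression claimed in the statement. Since $h_\epsilon^K(P_{XY})$ is sandwiched between $L_\epsilon^1$ and the upper bound in \eqref{jaleb}, the upper bound overestimates $h_\epsilon^K(P_{XY})$ by at most this quantity, establishing the corollary.

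There is no genuine obstacle in this argument; it is purely algebraic once Theorems~\ref{theorem2} and~\ref{theorem3} are in hand. The only thing one must be careful about is to match notation consistently: verifying that the leading $\epsilon\max_i\{\mu_i\}$ terms are identical in both bounds (they are, because both arise from allocating the full budget $\epsilon$ to the index with the largest coefficient $\mu_i$) and that the per-index term $H(Y_i|X_i)$ appears with the same coefficient $\mu_i$ in both bounds so that it cancels cleanly. With that bookkeeping done, the corollary follows by a one-line subtraction.
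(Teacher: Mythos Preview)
Your proposal is correct and matches the paper's own justification, which is embedded directly in the statement of the corollary rather than given as a separate proof: one simply subtracts $L_\epsilon^1$ from the upper bound in \eqref{jaleb}, and the $\epsilon\max_i\{\mu_i\}$ and $\sum_i \mu_i H(Y_i|X_i)$ terms cancel, leaving $\sum_{i=1}^N \mu_i(\delta_i+H(X_i|Y_i))$.
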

\subsection{Special case $\bm\epsilon=0$ (Independent $\bm X$ and $\bm U$):}
In this section we derive lower and upper bounds for $h_{0}^K(P_{XY})$. When $\epsilon=0$, \eqref{nane1} is decomposed into $N$ parallel problems as follows.
\begin{align}
h_{0}^i(P_{X_i,Y_i}) =\mu_i\delta_i+\mu_i \sup_{\begin{array}{c}\substack{P_{U_i|X_i,Y_i}:\\I(X_i;U_i)=0}\end{array}}\!\!I(U_i;Y_i)
\end{align}
In this case we obtain the next result.
\begin{corollary}\label{ch}
	In case of perfect privacy, i.e., zero leakage, let $\epsilon=0$ in Theorem~\ref{theorem2} and Theorem~\ref{theorem3}, then we have
	\begin{align}\label{jaleb0}
	\max\{L_{0}^1,L_{0}^2\}\leq h_{0}^K(P_{XY})\!\leq\sum_{i=1}^N\!\mu_i\left(H(Y_i|X_i)\!+\!\delta_i\right),
	\end{align}
	where
	\begin{align*}
	&L_{0}^1=\sum_{i=1}^N\mu_i\left(H(Y_i|X_i)-H(X_i|Y_i)\right),\\
	&L_{0}^2=\sum_{i=1}^N\! \mu_i\left(H(Y_i|X_i)\!-\!(\log(I(X_i;\!Y_i)\!+\!1)\!+\!4)\right),
	\end{align*}
	$\delta_i$ is defined in Theorem~\ref{theorem2} and $\mu_i$ is defined in Theorem~\ref{theorem3}. 
\end{corollary}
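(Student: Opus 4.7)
The plan is to obtain Corollary~\ref{ch} as a direct specialization of Theorem~\ref{theorem2} and Theorem~\ref{theorem3} at $\epsilon = 0$. For the upper bound, I set $\epsilon = 0$ in \eqref{jaleb}; the linear term $\epsilon \max_i\{\mu_i\}$ vanishes and what remains is exactly $\sum_{i=1}^N \mu_i\left(H(Y_i|X_i) + \delta_i\right)$, which is the right-hand side of \eqref{jaleb0}. One can also read this off the decomposition argument that opens the subsection: at $\epsilon = 0$ the program \eqref{nane1} splits into $N$ independent per-letter perfect-privacy problems $h_0^i(P_{X_iY_i})$, each of which is upper bounded by $H(Y_i|X_i)$ via \eqref{sm} with $\epsilon_i = 0$; weighting by $\mu_i$ and adding $\mu_i \delta_i$ (the transformation overhead inherited from Theorem~\ref{theorem1}) reproduces the same expression.

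For the lower bound, I specialize Theorem~\ref{theorem3} at $\epsilon = 0$. This forces $\epsilon_i = 0$ for every $i$, so the linear program over $\{\epsilon_i\}$ is trivial and both the $\epsilon \max_i\{\mu_i\}$ term in $L_\epsilon^1$ and the $\max_i\{\mu_i \gamma_i\}\epsilon$ term in $L_\epsilon^2$ vanish. The residual expressions are exactly $L_0^1$ and $L_0^2$ as stated. The only point that benefits from a brief check is that in Lemma~\ref{lem5} the coefficient $\alpha_i = \epsilon_i/H(X_i)$ collapses to $0$ when $\epsilon_i = 0$, so the per-letter bound \eqref{jaa} simplifies to $I(U'_i; Y_i) \geq H(Y_i|X_i) - (\log(I(X_i;Y_i)+1)+4)$, and similarly \eqref{ja} simplifies to $I(\tilde{U}_i; Y_i) \geq H(Y_i|X_i) - H(X_i|Y_i)$.

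Because the corollary is obtained by pure substitution, there is no genuine obstacle. What deserves a sanity check is that the constructions of $\tilde{U}_i$ and $U'_i$ (using the extended FRL and extended SFRL of \cite{zamani2022bounds}) remain valid at $\epsilon_i = 0$; in that regime they reduce to the standard FRL of Lemma~\ref{lemma1} and the standard SFRL of Lemma~\ref{lemma2}, giving $I(\tilde{U}_i; X_i) = I(U'_i; X_i) = 0$, and the mutual independence across $i$ of $\{(\tilde{U}_i, Y_i, X_i)\}_{i=1}^N$ and $\{(U'_i, Y_i, X_i)\}_{i=1}^N$ that drives the single-letterization in Lemmas~\ref{lem4} and~\ref{lem5} is preserved. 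Combining the resulting two per-letter lower bounds, summed against $\mu_i$, yields $\max\{L_0^1, L_0^2\} \leq h_0^K(P_{XY})$ and completes the proof.
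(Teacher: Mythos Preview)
Your proposal is correct and follows exactly the approach the paper takes: the corollary is obtained by direct substitution of $\epsilon=0$ into Theorem~\ref{theorem2} and Theorem~\ref{theorem3}, with the $\epsilon$-linear terms vanishing and the per-letter constructions reducing to the standard FRL/SFRL. The paper does not even provide a formal proof block for this corollary, so your write-up is, if anything, more thorough than what the paper supplies.
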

A new upper bound on $h_{0}^K(P_{XY})$ can be derived by strengthening \eqref{sd} with zero leakage. Using \cite[Theorem 4]{zamani2022bounds} we have
\begin{align}\label{jen}
h_{0}^i(P_{X_i,Y_i})\leq \min\{U^{i_1}_0,U^{i_2}_0\}
\end{align}
where
	\begin{align*}
&U^{i_1}_0= H(Y_i|X_i),\\
&U^{i_2}_0 =  H(Y_i|X_i) +\sum_{y_i\in\mathcal{Y}_i}\int_{0}^{1} \mathbb{P}_{X_i}\{P_{Y_i|X_i}(y_i|X_i)\geq t\}\times\\&\log (\mathbb{P}_{X_i}\{P_{Y_i|X_i}(y_i|X_i)\geq t\})dt+I(X_i;Y_i).
\end{align*}
Furthermore, if $|\mathcal{Y}_i|=2$, then $U^{i_2}_0$ is tight.\\
Combining \eqref{nane1} and \eqref{jen} we have
\begin{align*}
 h_{0}^K(P_{XY})\!\leq\sum_{i=1}^N\!\mu_i\left(\min\{U^{i_1}_0,U^{i_2}_0\}\!+\!\delta_i\right).
\end{align*}
\subsection{Special case $\bm{ X_i=f_i(Y_i) \ (H(X_i|Y_i)=0)}$:}
For all $i\in\{1,..,N\}$, let $X_i$ be a deterministic function of $Y_i$, i.e., $X_i=f_i(Y_i)$. Using Corollary~\ref{ch} the gap between the upper bound and lower bound for $h_{\epsilon}^K(P_{XY})$ is lower than or equal to $\sum_i \mu_i\delta_i$, where $\delta_i=\Delta_i^1=\sum_{j:Y_i\in C_j}H(X_j)$. However, in this case we can modify the proof of Theorem~\ref{theorem1} as has been done in \cite[Theorem 1]{9457633} which results in $\delta_i=0$. Thus, by using \cite[Theorem 1]{9457633} and combining it with \eqref{nane1}, the gap between upper and lower bound on $h_{\epsilon}^K(P_{XY})$ become zero and we have
\begin{align}
&h_{\epsilon}^K(P_{XY})\!=\nonumber\\ &\epsilon\max_i\!\left\{\!\sum_{j:Y_i\in C_j}\!\!\!\!\lambda_j\!\right\}\!+\!\sum_{i=1}^N\!\left(\sum_{j:Y_i\in C_j}\!\!\!\!\lambda_j\!\!\right)\!\!H(Y_i|X_i).
\end{align}
In this case the lower bound attained by Lemma~\ref{lem4} is larger than or equal to the lower bound attained by \eqref{tala}, i.e., $L_{\epsilon}^1\geq L_{\epsilon}^2$, since we have
\begin{align*}
&H(Y_i|X_i)+\epsilon_i\geq \\&H(Y_i|X_i)+\epsilon_i-(1-\alpha_i)(\log(H(X_i)+1)+4).
\end{align*}  
\section{Conclusion}
An information theoretic privacy design problem with $K$ users has been studied, where it has been shown that by using a specific transformation, Functional Representation Lemma and Strong Functional Representation Lemma upper bounds on $h_{\epsilon}^K(P_{XY})$ can be derived and the upper bounds can be decomposed into $N$ parallel privacy problems. Furthermore, lower bounds are obtained using Functional Representation Lemma and Strong Functional Representation Lemma. The gap between the upper and lower bound is lower than or equal to $\sum_{i=1}^N \left(\sum_{j:Y_i\in C_j}\lambda_j\right)(\delta_i+H(X_i|Y_i))$ which can be reduced to zero when $X$ is a point-wise deterministic function of $Y$. The lower bound found in this work asserts that the disclosed data releases the information about $X_i$ with leakage equal to $\epsilon$ (the maximum possible leakage) and it is independent of all other components of $X$. Finally, in perfect privacy case where no leakages are allowed, i.e., $\epsilon=0$, the upper bound is strengthened. 
 \section*{Appendix A}
 	For any any feasible $U$ in \eqref{main1}, let $\bar{U}=(\bar{U}_1,..,\bar{U}_N)$ be the RV constructed as in Lemma~~\ref{pare} and $\tilde{U}=(\tilde{U}_1,..,\tilde{U}_N)$ be constructed as follows: For all $i\in\{1,...,N\}$, let $\tilde{U}_i$ be the RV found by the FRL (Lemma~\ref{lemma1}) using $X\leftarrow(\bar{U}_i,X_i)$ and $Y\leftarrow Y_i$. Thus, we have
 \begin{align}\label{jj}
 I(\tilde{U}_i;\bar{U}_i,X_i)&=0,
 \end{align}
 and
 \begin{align}\label{ii}
 H(Y_i|X_i,\bar{U}_i,\tilde{U}_i)&=0.
 \end{align}
 Let $U^*=(U^*_1,...,U^*_N)$ where $U^*_i=(\tilde{U}_i,\bar{U}_i)$. Using Lemma~\ref{pare} part (ii), i.e., $\{(\bar{U}_i,X_i,Y_i)\}_{i=1}^N$ are mutually independent, and the construction used as in proof of \cite[Lemma~1]{kostala} we can build $\tilde{U}_i$ such that $\{(\tilde{U}_i,\bar{U}_i,X_i,Y_i)\}_{i=1}^N$ are mutually independent. Due to the independence of $(\tilde{U}_i,\bar{U}_i,Y_i)$ the conditional distribution $P_{U^*|Y}(u^*|y)$ can be stated as follows
 \begin{align}
 P_{U^*|X}(u^*|x)\!=\!\prod_{i=1}^{N}P_{\tilde{U}_i,\bar{U}_i|Y}(\tilde{U}_i,\bar{U}_i|y)\!=\!\prod_{i=1}^{N}P_{U^*|Y}(u^*_i|y_i).
 \end{align}
 For proving leakage constraint in \eqref{koon} we have
 \begin{align*}
 I(X;U^*)&= I(X_1,..,X_N;\tilde{U}_1,\bar{U}_1,...,\tilde{U}_N,\bar{U}_N)\\&\stackrel{(a)}{=} \sum_i I(X_i;\tilde{U}_i,\bar{U}_i)\\&\stackrel{(b)}{=}\sum_i I(X_i;\bar{U}_i) = I(X;\bar{U}) \\&\stackrel{(c)}{=} I(X;U),  
 \end{align*}
 where (a) follows by the independency of $(\tilde{U}_i,\bar{U}_i,Y_i)$ over different $i$'s, (b) follows by \eqref{jj} and (c) holds due to Lemma~\ref{pare2}. 
 For proving the upper bound on utility in \eqref{koonkesh} we first derive expressions for $I(C_i;U^*)$ and $I(C_i;U)$ as follows.
 \begin{align}
 I(C_i;U^*)&=I(X,C_i;U^*)-I(X;U^*|C_i)\nonumber\\
 &=I(X;U^*)+I(C_i;U^*|X)-I(X;U^*|C_i)\nonumber\\
 &=I(X;U^*)+H(C_i|X)\nonumber\\& \ \ \ -H(C_i|X,U^*)-I(X;U^*|C_i).\label{kosenanat}
 \end{align} 
 Similarly, we have
 \begin{align}
 I(C_i;U)=\!I(X;U)\!\!+\!\!H(C_i|X)\!\!-\!\!H(C_i|X,U)\!\!-\!\!I(X;U|C_i\!).\label{kosebabat}
 \end{align}
 Thus, by using \eqref{kosenanat}, \eqref{kosebabat} and \eqref{koon} we have
 \begin{align*}
 I(C_i;U)&\stackrel{(a)}{=}I(C_i;U^*)+H(C_i|X,U^*)\\& \ \ +I(X;U^*|C_i)-H(C_i|X,U)-I(X;U|C_i)\\
 &\stackrel{(b)}{=}I(C_i;U^*)+\sum_{j:Y_i\in C_j} H(Y_i|X_i,U^*_i)\\& \ \ +\sum_{j:Y_i\in C_j} I(X_j;U^*_j|Y_j)+\sum_{j:Y_i\notin C_j} I(Xj;U^*_j)\\ \ \ &-H(C_i|X,U)-I(X;U|C_i)\\&\leq I(C_i;U^*)+\sum_{j:Y_i\in C_j} H(Y_j|X_j,U^*_j)\\& \ \ +\sum_{j:Y_i\in C_j} I(X_j;U^*_j|Y_j)+\sum_{j:Y_i\notin C_j} I(X_j;U^*_j)\\& \ \ -I(X;U|C_i) \\&= I(C_i;U^*)+\sum_{j:Y_i\in C_j} H(Y_j|X_j,\bar{U}_j,\tilde{U}_j)\\& \ \ +\!\!\sum_{j:Y_i\in C_j}\!\! I(X_j;\bar{U}_j,\tilde{U}_j|Y_j)\!+\!\!\!\!\sum_{j:Y_i\notin C_j}\!\! I(X_j;\bar{U}_j,\tilde{U}_j)\\& \ \ -I(X;U|C_i)\\&\stackrel{(c)}{=}I(C_i;U^*)+\!\!\sum_{j:Y_i\in C_j}\!\! I(X_j;\bar{U}_j,\tilde{U}_j|Y_j)\!\\& \ \ +\!\!\!\!\sum_{j:Y_i\notin C_j}\!\! I(X_j;\bar{U}_j)-I(X;U|C_i)\\
 &\stackrel{(d)}{\leq} I(C_i;U^*) +\!\!\sum_{j:Y_i\in C_j}\!\! H(X_j|Y_j)\!+ I(X;U)\\& \ \ -I(X;U|C_i)
 \\&=I(C_i;U^*)\!+\!\!\sum_{j:Y_i\in C_j}\!\! H(X_j|Y_j)\!+I(X;C_i)\\& \ \ -I(X;C_i|U)
 \\&\leq I(C_i;U^*)+\!\!\!\sum_{j:Y_i\in C_j}\!\! H(X_j|Y_j)\!+I(X;C_i)\\&=I(C_i;U^*)+\Delta_i^1.
 \end{align*}
 where in step (a) we used the equality property of the leakage in \eqref{koon},  where step (b) follows by the independency of $(\tilde{U}_i,\bar{U}_i,Y_i)$ over different $i$'s and step (c) follows by the fact that $\tilde{U}_i$ is produced by FRL, i.e.,  $H(Y_i|X_i,\bar{U}_i,\tilde{U}_i)=0$ and $\tilde{U}_i$ is independent of $\bar{U}_i$ and $X_i$. Moreover, in step (d) we used the simple bound $\sum_{j:Y_i\notin C_j} I(X_j;\bar{U}_j,\tilde{U}_j|Y_j)\leq \sum_{j:Y_i\notin C_j}H(X_j|Y_j)$ and $\sum_{j:Y_i\notin C_j}\!\! I(X_j;\bar{U}_j)\leq I(X;\bar{U})=I(X;U)$, where the last equality follows by Lemma~\ref{pare2}.\\
 For proving the second upper bound let $U'=(U'_1,..,U'_N)$ be constructed as follows. For all $i\in\{1,...,N\}$, let $U'_i$ be the RV found by SFRL in Lemma~\ref{lemma2} using $X\leftarrow(\bar{U}_i,X_i)$ and $Y\leftarrow Y_i$. Thus, we have
 \begin{align}\label{jjj}
 I(U'_i;\bar{U}_i,X_i)&=0,
 \end{align}
 and
 \begin{align}\label{iii}
 H(Y_i|X_i,\bar{U}_i,U'_i)&=0,
 \end{align}
 and
 \begin{align}\label{k}
 I(\bar{U}_i,X_i;U'_i|Y_i)&\leq \log(I(\bar{U}_i,X_i;Y_i)+1)+4\nonumber\\&=\log(I(X_i;Y_i)+1)+4,
 \end{align}
 where in last line we used the Markov chain stated in Lemma~\ref{pare2} part (i). Now let $U^*=(U^*_1,...,U^*_N)$ where $U^*_i=(U'_i,\bar{U}_i)$. Similarly, we can construct $U'$ such that $\{(U'_i,\bar{U}_i,X_i,Y_i)\}_{i=1}^N$ are mutually independent. The proof of \eqref{koon} does not change and for proving the second upper bound on utility we have
 \begin{align*}
 I(C_i;U)&\leq I(C_i;U^*)+\!\!\sum_{j:Y_i\in C_j}\!\! I(X_j;\bar{U}_j,U'_j|Y_j)\!\\& \ \ +\!\!\!\!\sum_{j:Y_i\notin C_j}\!\! I(X_j;\bar{U}_j)-I(X;U|C_i)\\&= I(C_i;U^*)\!+\!\!\!\!\!\!\sum_{j:Y_i\in C_j}\!\! I(X_j;U'_j|Y_j,\bar{U}_j)\!\\& \ \ +\!\!\!\sum_{j:Y_i\in C_j}I(X_j;\bar{U}_j|Y_j)+\sum_{j:Y_i\notin C_j}\!\!\! I(X_j;\bar{U}_j)\!\\&-\!I(X;U|C_i)\\&\stackrel{(a)}{\leq}  I(C_i;U^*)+\sum_{j:Y_i\in C_j}\!\!\left(\log(I(X_i;Y_i)+1)+4\right)\\&\ \ +\!\!\!\sum_{j:Y_i\in C_j}I(X_j;\bar{U}_j|Y_j)+\sum_{j:Y_i\notin C_j}\!\!\! I(X_j;\bar{U}_j)\!\\&-\!I(X;U|C_i)\\&\stackrel{(b)}{=}I(C_i;U^*)+\sum_{j:Y_i\in C_j}\!\!\left(\log(I(X_i;Y_i)+1)+4\right)\\&I(X;C_i)-I(X;C_i|U)-\sum_{j:Y_i\notin C_j}I(\bar{U}_j;Y_j)\\&\leq I(C_i;U^*)+\sum_{j:Y_i\in C_j}\!\!\left(\log(I(X_i;Y_i)+1)+4\right)\\&\ \ +I(X;C_i)=I(C_i;U^*)+\Delta_i^2.
 \end{align*}
 Step (a) follows by \eqref{k} since we have
 \begin{align*}
 I(X_i;U'_i|Y_i,\bar{U}_i)\leq I(\bar{U}_i,X_i;U'_i|Y_i).
 \end{align*}
 Furthermore, step (b) follows since we have
 \begin{align*}
 &\sum_{j:Y_i\in C_j}\!\!I(X_j;\bar{U}_j|Y_j)+\!\!\sum_{j:Y_i\notin C_j}\!\!\! I(X_j;\bar{U}_j)-I(X;U|C_i)\stackrel{(i)}{=}\\&\sum_{j:Y_i\in C_j}\!\! H(\bar{U}_j|Y_j)\!\!+\!\!\sum_{j:Y_i\notin C_j} \!\!H(\bar{U}_j)\!-\!H(\bar{U}|X)\!-\!I(X;U|C_i)=\\&\sum_{j:Y_i\in C_j}\!\! H(\bar{U}_j|Y_j)\!\!+\!\!\sum_{j:Y_i\notin C_j}\!\!\! \!\!H(\bar{U}_j)\!-\!H(\bar{U})\!\\&\ \ I(\bar{U};X)\!-\!I(X;U|C_i)\stackrel{(ii)}{=}\\& \ \ -\sum_{j:Y_i\in C_j} I(\bar{U}_j;Y_j)+I(X;U)-I(X;U|C_i)=\\&\ \ -\sum_{j:Y_i\in C_j} I(\bar{U}_j;Y_j)+I(X;C_i)-I(X;C_i|U),
 \end{align*}
 where step (i) follows by the Markov chain $\bar{U}-X-Y$ in Lemma~\ref{pare} and in step (ii) we used Lemma~\ref{pare2}.
\bibliographystyle{IEEEtran}
\bibliography{IEEEabrv,IZS}
\end{document}